
\documentclass[letterpaper, 10 pt,journal]{ieeetran}

\IEEEoverridecommandlockouts                               


\pdfminorversion=4

\usepackage{graphics} 
\usepackage{graphicx}
\usepackage{epsfig} 
\usepackage{stfloats}
\usepackage{amsmath} 
\usepackage{amssymb}  

\usepackage{tikz} 
\usepackage{pgfplots} 

\usepackage{mleftright}
\usepackage{algpseudocode}
\usepackage{algorithm}
\usepackage{epstopdf}
\usepackage{arydshln}

\usepackage{color}

\usepackage{lipsum}
\usepackage{verbatim}

\usepackage{amsthm}

\usepackage{mathtools}

\newtheorem{theorem}{\bf Theorem} \newtheorem{definition}[theorem]{\bf Definition} 
 \newtheorem{remark}[theorem]{\bf Remark}
 \newtheorem{corollary}[theorem]{\bf Corollary}  
\newtheorem{assumption}[theorem]{\bf Assumption}

\title{Robust data-driven state-feedback design}

\author{Julian Berberich$^{1}$, Anne Koch$^{1}$, Carsten W. Scherer$^{2}$, and Frank Allg\"ower$^{1}$
\thanks{
This work was funded by Deutsche Forschungsgemeinschaft (DFG, German Research Foundation) under Germany's Excellence Strategy - EXC 2075 - 390740016.
The authors thank the International Max Planck Research School for Intelligent Systems (IMPRS-IS) for supporting Julian Berberich and Anne Koch.}
\thanks{$^{1}$J. Berberich, A. Koch and F. Allg\"ower are with the Institute for Systems Theory and Automatic Control, University of Stuttgart, 70569 Stuttgart, Germany. {\tt\small E-mail: \{julian.berberich, anne.koch, frank.allgower\}@ist.uni-stuttgart.de}}
\thanks{$^{2}$C. W. Scherer is with the Institute of Mathematical Methods in the Engineering Sciences, Numerical Analysis and Geometrical Modeling, Department of Mathematics, University of Stuttgart, 70569 Stuttgart, Germany. {\tt\small E-mail: carsten.scherer@mathematik.uni-stuttgart.de}}
}

\begin{document}
\IEEEpubid{\begin{minipage}{\textwidth}\ \\[12pt] This version has been accepted for publication in Proc. American Control Conference (ACC), 2020. Personal use of this material is permitted. Permission from AACC must be obtained for all other uses, in any current or future media, including reprinting/republishing this material for advertising or promotional
purposes, creating new collective works, for resale or redistribution to servers or lists, or reuse of any copyrighted component of this work in other works.
\end{minipage}}

\maketitle

\begin{abstract}
We consider the problem of designing robust state-feedback 
controllers for discrete-time linear time-invariant systems, based directly on measured data.
The proposed design procedures require no model knowledge, but only a single open-loop data trajectory, which may be affected by noise.
First, a data-driven characterization of the uncertain class of closed-loop matrices under state-feedback is derived.
By considering this parametrization in the robust control framework, we design data-driven state-feedback gains with guarantees on stability and performance, containing, e.g., the $\mathcal{H}_\infty$-control problem as a special case.
Further, we show how the proposed framework can be extended to take partial model knowledge into account.
The validity of the proposed approach is illustrated via a numerical example.
\end{abstract}


\section{Introduction}
\label{sec:introduction}
Recently, the design of controllers directly from measured data has received increasing interest~\cite{Hou13,Recht18}.
While established methods, e.g., those based on reinforcement learning, rarely address closed-loop guarantees, there has been a renewed effort to provide such guarantees using novel statistical estimation techniques~\cite{Matni19b,Matni19,Boczar18,Dean19}.
Potential alternatives are, e.g., robust control with prior set membership identification~\cite{Milanese91}, which is however well-known to be computationally demanding, and unfalsification-based approaches~\cite{Kosut01}, which typically require infinitely long data for closed-loop guarantees.
In general, providing non-conservative end-to-end guarantees for the closed loop using \emph{noisy data of finite length} is an open problem, even if the data is generated by a linear time-invariant (LTI) system.
\IEEEpubidadjcol

A promising approach towards this goal relies on behavioral systems theory.
In~\cite{Willems05}, it was proven that the vector space of all input-output trajectories of an LTI system is spanned by time-shifts of a single measured trajectory, given that the respective input signal is persistently exciting.
Thus, a single data trajectory can be used to characterize an LTI system, without any prior identification steps.
Recently, there have been various contributions which consider this result in the context of data-driven system analysis and control, including 
dissipativity verification from measured data~\cite{Romer19} or an extension of~\cite{Willems05} to certain classes of nonlinear systems~\cite{Berberich20a}.
Moreover, the recent work~\cite{Persis19} derives a simple data-dependent closed-loop parametrization of LTI systems under state-feedback.
This parametrization is used to solve various control problems from data, including stabilization and linear-quadratic regulation.
However, no meaningful guarantees were given in the presence of noisy data.

It is the goal of this paper to provide non-conservative end-to-end guarantees for data-driven control.
To be more precise, we employ a single noisy input-state trajectory of finite length to design controllers which guarantee closed-loop stability and performance for all systems which are consistent with the measured data and the assumed noise bound.
This is achieved by extending the approach of~\cite{Persis19} to account for noise and applying robust control techniques to the resulting uncertain system class.
Another recent paper~\cite{Waarde19} considers data-driven analysis and control with not persistently exciting data.
In particular, it is shown for noise-free data that certain control problems can be solved from data, even if the system cannot be uniquely identified, thus illustrating advantages of direct data-driven control.
Similarly, the results of this paper do not require persistence of excitation explicitly.
Moreover, our results lead to simple design procedures for direct data-driven control with desirable closed-loop guarantees, and are thus a promising alternative to identification-based control.

The paper is structured as follows.
After stating the problem formulation in Section~\ref{sec:preliminaries}, we use noisy data to describe the uncertain closed loop under state-feedback, and we apply known robust control methods to design controllers with stability and performance guarantees in Section~\ref{sec:state_feedback}.
Moreover, we extend the proposed, purely data-driven approach to systems with mixed data-driven and model-based components.
In Section~\ref{sec:example}, we apply the robust state-feedback design techniques successfully to an unstable example system.
The paper is concluded in Section~\ref{sec:conclusion}.

\section{Preliminaries}
\label{sec:preliminaries}
We denote the $n\times n$ identity matrix by $I_n$, where the index is omitted if the dimension is clear from the context.
Further, $A^\perp$ denotes a matrix containing a basis of the kernel of $A$.
We write $\ell_2$ for the space of square-summable sequences.
In a linear matrix inequality (LMI), $*$ represents blocks, which can be inferred from symmetry.

\newpage

Moreover, we define, for elements $\{x_k\}_{k=i}^{i+L+N-2}$ of a sequence $x$, the Hankel matrix
\begin{align*}
X_{i,L}^N\coloneqq\begin{bmatrix}x_i & x_{i+1}& \dots & x_{i+N-1}\\
x_{i+1} & x_{i+2} & \dots & x_{i+N}\\
\vdots & \vdots & \ddots & \vdots\\
x_{i+L-1} & x_{i+L} & \dots & x_{i+L+N-2}
\end{bmatrix}.
\end{align*}
That is, the matrix $X_{i,L}^N$ starts with the element $x_i$ and has $L$ rows and $N$ columns. 
As a shorthand notation, we abbreviate $N$-windows of $x$, starting at $i=0$ and $i=1$, by
\begin{align*}
X&= X_{0,1}^N=\begin{bmatrix}x_0 & x_1& \dots & x_{N-1}\end{bmatrix},\\
X_+&=X_{1,1}^N=\begin{bmatrix}x_1 & x_{2}& \dots & x_{N}\end{bmatrix},
\end{align*}
respectively.
%
%
In the present paper, we consider LTI systems of the form
%
\begin{align}\label{eq:sys}
\mleft[
\begin{array}{c}x_{k+1}\\\hline
z_k
\end{array}
\mright]=
\mleft[\begin{array}{c|cc}A_{tr}&B_w&B_{tr}\\\hline
C&D_{w}&D
\end{array}
\mright]
\mleft[
\begin{array}{c}x_{k}\\\hline
w_k\\ u_k
\end{array}
\mright],
\end{align}
where $x_k\in\mathbb{R}^n$ is the state, $w_k\in\mathbb{R}^{m_w}$ is the disturbance, $u_k\in\mathbb{R}^m$ is the control input, and $z_k\in\mathbb{R}^{p_z}$ is the performance output. 
We design state-feedback controllers $u_k=Kx_k$ to control the system~\eqref{eq:sys}.
Our design procedures are purely data-driven and do not require knowledge of the \emph{true} system matrices $A_{tr},B_{tr}$.
We do, however, assume that the matrices $B_w,C,D_w,D$ are known.
For our purposes, $B_w$ is essentially a parameter to model the influence of the disturbance, whereas $C,D_{w},D$ constitute a user choice for performance.
In this paper, we use the following definition of persistence of excitation of the data under consideration.
\begin{definition}\label{def:pe}
The sequence $\{x_k,u_k\}_{k=0}^{N-1}$ is called persistently exciting if the matrix $\begin{bmatrix}X\\U\end{bmatrix}$ has full row rank.
\end{definition}
%
According to~\cite{Willems05}, controllability and a certain rank property of the input are sufficient for persistence of excitation.
\begin{theorem}[{\cite[Corollary 2]{Willems05}}]\label{thm:Willems}
If $\left(A_{tr},\begin{bmatrix}B_{tr}&B_w\end{bmatrix}\right)$ is controllable and the matrix
\begin{align*}
\begin{bmatrix}W_{0,n+1}^{N-n}\\U_{0,n+1}^{N-n}\end{bmatrix}
\end{align*}
has full row rank, then $\{x_k,u_k\}_{k=0}^{N-1}$ is persistently exciting.
\end{theorem}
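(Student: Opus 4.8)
The plan is to read the claim as an instance of the fundamental lemma of \cite{Willems05}, applied to the augmented input signal, and then to discard the disturbance channel by a one-line rank argument. First I would treat the stacked signal $\begin{bmatrix}w_k\\u_k\end{bmatrix}$ as the input to the state equation in \eqref{eq:sys}, which reads $x_{k+1}=A_{tr}x_k+\begin{bmatrix}B_w&B_{tr}\end{bmatrix}\begin{bmatrix}w_k\\u_k\end{bmatrix}$. Since permuting the columns of an input matrix does not change the column space of the controllability matrix, controllability of $\bigl(A_{tr},\begin{bmatrix}B_{tr}&B_w\end{bmatrix}\bigr)$ is equivalent to controllability of $\bigl(A_{tr},\begin{bmatrix}B_w&B_{tr}\end{bmatrix}\bigr)$, so the augmented system is controllable. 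Moreover, the assumed full row rank of $\begin{bmatrix}W_{0,n+1}^{N-n}\\U_{0,n+1}^{N-n}\end{bmatrix}$ is, up to a row permutation, exactly persistence of excitation of order $n+1$ of this augmented input.

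With this identification in place, I would invoke \cite[Corollary 2]{Willems05}: for a controllable system driven by an input that is persistently exciting of order $n+1$, the depth-one Hankel matrix of the states stacked onto the depth-one Hankel matrix of the inputs has full row rank. In the present notation this matrix is $\begin{bmatrix}X\\W\\U\end{bmatrix}$, of size $(n+m_w+m)\times N$, so the lemma guarantees that its $n+m_w+m$ rows are linearly independent.

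The remaining step is a trivial rank observation. Any subset of a linearly independent family of row vectors is again linearly independent, so deleting the $m_w$ rows of $W$ from $\begin{bmatrix}X\\W\\U\end{bmatrix}$ leaves the $n+m$ rows of $\begin{bmatrix}X\\U\end{bmatrix}$ linearly independent. Hence $\begin{bmatrix}X\\U\end{bmatrix}$ has full row rank, which by Definition~\ref{def:pe} is precisely persistence of excitation of $\{x_k,u_k\}_{k=0}^{N-1}$.

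Since the result is quoted from \cite{Willems05}, I expect no genuine obstacle beyond the fundamental lemma itself, which carries all the technical weight; the passage from the augmented input to $\begin{bmatrix}X\\U\end{bmatrix}$ is essentially free. Were one to reprove the lemma rather than cite it, the crux would be excluding a nonzero left annihilator $\begin{bmatrix}\xi^\top&\eta^\top\end{bmatrix}$ of $\begin{bmatrix}X\\W\\U\end{bmatrix}$: interpreting $\xi^\top x_k+\eta^\top\begin{bmatrix}w_k\\u_k\end{bmatrix}$ as the output of a realization driven by the augmented input and repeatedly shifting this identity while substituting the state equation yields, together with persistence of excitation of order $n+1$ and controllability, that $\xi$ annihilates the reachable subspace and hence $\xi=0$, after which $\eta=0$ follows from the full row rank of the input data alone. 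That shift-and-controllability argument is the technical heart.
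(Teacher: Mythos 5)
Your proposal is correct and follows exactly the route the paper intends: the paper gives no proof of its own but simply cites \cite[Corollary 2]{Willems05}, and your argument is the natural unpacking of that citation --- treat $(w_k,u_k)$ as the augmented input (noting controllability and persistence of excitation are invariant under the column/row permutations involved), apply the fundamental lemma with window length $L=1$ to conclude $\begin{bmatrix}X\\W\\U\end{bmatrix}$ has full row rank, and drop the $W$ rows. No gap; the row-deletion step and the permutation observations are exactly the ``essentially free'' glue needed to pass from the cited result to the theorem as stated.
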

Definition~\ref{def:pe} differs from the notion of persistence of excitation considered in~\cite{Willems05}, which concerns only the input data, and is preferred in the present paper for convenience.
In~\cite{Persis19}, it is shown how a single, persistently exciting open-loop trajectory can be employed to recover the system matrices of an LTI system.
Furthermore, a linear parametrization of the closed loop under state-feedback is derived, depending also only on a single open-loop data trajectory.
It is the contribution of the present paper to extend the framework of~\cite{Persis19} in order to provide robust stability and performance guarantees in the presence of noise.
In contrast to~\cite{Persis19}, persistence of excitation will generally not be required for our results.

Throughout this paper, we consider the following scenario:
From simulation or an experiment, a single open-loop input-state sequence $\{x_k,u_k\}_{k=0}^{N}$ is obtained as a trajectory of~\eqref{eq:sys} for some \emph{unknown} disturbance $\{\hat{w}_k\}_{k=0}^{N-1}$.
This trajectory is used directly for robust controller design, without prior system identification.
The only available information on the disturbance realization is the following bound on the matrix 
\begin{align*}
\hat{W}=\begin{bmatrix}\hat{w}_0&\hat{w}_1&\dots&\hat{w}_{N-1}\end{bmatrix}.
\end{align*}
%
\begin{assumption}\label{ass:disturbance_bound}
The matrix $\hat{W}$ is an element of
\begin{align*}
\mathcal{W}=\Big\{W\in\mathbb{R}^{m_w\times N}\Big\lvert
\begin{bmatrix}W\\I\end{bmatrix}^\top \begin{bmatrix}
Q_w&S_w\\S_w^\top&R_w\end{bmatrix}
\begin{bmatrix}W\\I\end{bmatrix}\succeq0\Big\},
\end{align*}
for some known matrices $Q_w\in\mathbb{R}^{m_w\times m_w}$, $S_w\in\mathbb{R}^{m_w\times N}$, $R_w\in\mathbb{R}^{N\times N}$ with $R_w\succ0$.
\end{assumption}
%
Through Assumption~\ref{ass:disturbance_bound} it is assumed that the unknown disturbance realization, which affects the measured data, lies in some known set which is described by a quadratic matrix inequality.
Implicitly, $\hat{W}\in\mathcal{W}$ implies a quadratic bound on the sequence $\{\hat{w}_k\}_{k=0}^{N-1}$ and encompasses many practical bounds as special cases.
For instance, if the maximal singular value of $\hat{W}$ is bounded as $\sigma_{\max}(\hat{W})\leq\bar{w}$, then $\hat{W}\in\mathcal{W}$ holds with $Q_w=-I$, $S_w=0$, $R_w=\bar{w}^2 I$.
More generally, a description of the form $\hat{W}\in\mathcal{W}$ provides a flexible framework to model general noise signals, in particular when multiple quadratic matrix inequalities are combined.
It is an interesting aspect for future research to derive suitable matrices $Q_w,S_w,R_w$ for different, practically relevant scenarios such as norm bounds on the sequence $\{\hat{w}_k\}_{k=0}^{N-1}$.


\section{Data-driven state-feedback}\label{sec:state_feedback}
In this section, we consider the design of state-feedback gains, based directly on measured data which is perturbed by a disturbance satisfying Assumption~\ref{ass:disturbance_bound}.
First, we derive a data-driven characterization of the uncertain closed loop, using a single open-loop data trajectory.
Thereafter, we apply known robust control methods to this parametrization in order to design state-feedback controllers which guarantee stability and performance for all closed-loop matrices that are consistent with the measured data.
Finally, we extend the proposed framework to systems with mixed data-driven and model-based components.

\subsection{Uncertain closed-loop parametrization}\label{sec:state_feedback_uncertain}
In the following, we extend~\cite{Persis19} by characterizing the closed-loop dynamics of~\eqref{eq:sys} under state-feedback, using noisy measurements.
Let $\{x_k,u_k\}_{k=0}^{N}$ be a measured trajectory of~\eqref{eq:sys}, corresponding to an unknown disturbance realization $\hat{W}$.
We define $\Sigma_{X,U}$ as the set of all pairs $(A,B)$ that are consistent with the data $\{x_k,u_k\}_{k=0}^{N}$ for \emph{some} noise instance $W\in\mathcal{W}$, i.e.,
\begin{align*}
\Sigma_{X,U}=\{(A,B)\mid X_+=AX+BU+B_wW,\>W\in\mathcal{W}\}.
\end{align*}
Using \emph{fixed} data matrices $X$ and $U$, $\Sigma_{X,U}$ parametrizes the unknown system matrices $A$ and $B$ via $\mathcal{W}$.
By assumption, the true disturbance realization $\hat{W}$ satisfies $X_+=A_{tr}X+B_{tr}U+B_w\hat{W}$ and $\hat{W}\in\mathcal{W}$; therefore, the true pair $(A_{tr},B_{tr})$ is an element of $\Sigma_{X,U}$.
Furthermore, for some state-feedback gain $K$, we define the set of closed-loop matrices that are consistent with the data as
\begin{align*}
\Sigma_{X,U}^K=\{A_K\mid A_K=A+BK,(A,B)\in\Sigma_{X,U}\}.
\end{align*}
In the following, we show that an exact parametrization of $\Sigma_{X,U}^K$ can be constructed directly from open-loop data.
To this end, for some matrix $G\in\mathbb{R}^{N\times n}$, we define $\mathcal{A}_G$ as the set of matrices $A_G\in\mathbb{R}^{n\times n}$ such that
\begin{align}\label{eq:thm_uncertain_matrix1}
A_G=(X_+-B_wW)G,
\end{align}
for some $W\in\mathcal{W}$ satisfying
\begin{align}\label{eq:thm_uncertain_matrix2}
(X_+-B_wW)\begin{bmatrix}X\\U\end{bmatrix}^\perp=0.
\end{align}
\begin{theorem}\label{thm:uncertain}
If $G\in\mathbb{R}^{N\times n}$ and $K\in\mathbb{R}^{m\times n}$ satisfy
\begin{align}\label{eq:thm_uncertain_equality}
\begin{bmatrix}X\\U\end{bmatrix}G=\begin{bmatrix}I\\K\end{bmatrix},
\end{align}
then $\Sigma_{X,U}^K=\mathcal{A}_G$. 
\end{theorem}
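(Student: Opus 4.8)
The goal is to show the two sets $\Sigma_{X,U}^K$ and $\mathcal{A}_G$ coincide, so the natural strategy is a double inclusion argument. The key is that both sets are described in terms of noise instances $W \in \mathcal{W}$, and the constraint \eqref{eq:thm_uncertain_equality} linking $G$ and $K$ is what ties the open-loop parametrization to the closed-loop gain. Before doing the inclusions, I would record the consequence of \eqref{eq:thm_uncertain_equality}: it says $XG = I$ and $UG = K$, which will be used repeatedly to turn expressions of the form $(A+BK)$ into data expressions.

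**The inclusion $\Sigma_{X,U}^K \subseteq \mathcal{A}_G$.**

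First I would take an arbitrary $A_K \in \Sigma_{X,U}^K$, so $A_K = A + BK$ for some $(A,B) \in \Sigma_{X,U}$. By definition of $\Sigma_{X,U}$ there is a noise instance $W \in \mathcal{W}$ with $X_+ = AX + BU + B_w W$, equivalently $X_+ - B_w W = AX + BU = \begin{bmatrix}A & B\end{bmatrix}\begin{bmatrix}X\\U\end{bmatrix}$. The plan is now to right-multiply by $G$ and invoke \eqref{eq:thm_uncertain_equality}: this yields $(X_+ - B_w W)G = \begin{bmatrix}A & B\end{bmatrix}\begin{bmatrix}I\\K\end{bmatrix} = A + BK = A_K$, which is exactly \eqref{eq:thm_uncertain_matrix1}. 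It then remains to verify that this same $W$ satisfies the kernel condition \eqref{eq:thm_uncertain_matrix2}; this follows because $X_+ - B_w W = \begin{bmatrix}A & B\end{bmatrix}\begin{bmatrix}X\\U\end{bmatrix}$ is a left multiple of $\begin{bmatrix}X\\U\end{bmatrix}$, and right-multiplying by $\begin{bmatrix}X\\U\end{bmatrix}^\perp$ annihilates it.

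**The inclusion $\mathcal{A}_G \subseteq \Sigma_{X,U}^K$ — the main obstacle.**

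Conversely, take $A_G \in \mathcal{A}_G$, so $A_G = (X_+ - B_w W)G$ for some $W \in \mathcal{W}$ obeying \eqref{eq:thm_uncertain_matrix2}. Here the difficulty is the reverse direction: I now have only the data expression and must \emph{reconstruct} a valid pair $(A,B) \in \Sigma_{X,U}$ whose closed-loop matrix equals $A_G$. The central step is the kernel condition \eqref{eq:thm_uncertain_matrix2}, which says the rows of $X_+ - B_w W$ lie in the row space of $\begin{bmatrix}X\\U\end{bmatrix}$; hence there exists some $\begin{bmatrix}A & B\end{bmatrix}$ with $X_+ - B_w W = \begin{bmatrix}A & B\end{bmatrix}\begin{bmatrix}X\\U\end{bmatrix}$, i.e. $X_+ = AX + BU + B_w W$, certifying $(A,B) \in \Sigma_{X,U}$. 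I expect this existence-of-a-factorization step to be the crux, and I would justify it carefully via the defining property of $(\cdot)^\perp$ (that $M M^\perp = 0$ iff the rows of $M$ lie in the row space of the original matrix). Once such $(A,B)$ is fixed, multiplying by $G$ and using \eqref{eq:thm_uncertain_equality} gives $A_G = (X_+ - B_w W)G = \begin{bmatrix}A & B\end{bmatrix}\begin{bmatrix}I\\K\end{bmatrix} = A + BK$, so $A_G \in \Sigma_{X,U}^K$. The subtle point worth flagging is that the reconstructed $(A,B)$ need not be unique (when $\begin{bmatrix}X\\U\end{bmatrix}$ lacks full row rank), but uniqueness is not needed — any valid factorization suffices, and it is precisely this flexibility that allows the result to hold without persistence of excitation.
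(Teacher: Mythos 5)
Your proposal is correct and follows essentially the same route as the paper's proof: a double inclusion in which \eqref{eq:thm_uncertain_equality} converts between $(X_+-B_wW)G$ and $A+BK$, and the kernel condition \eqref{eq:thm_uncertain_matrix2} is recognized as equivalent to the existence of a factorization $X_+-B_wW=\begin{bmatrix}A&B\end{bmatrix}\begin{bmatrix}X\\U\end{bmatrix}$ (the paper phrases this via the Fredholm alternative, you via row spaces --- the same fact). Your identification of that factorization step as the crux, and your remark that uniqueness of $(A,B)$ is not needed, match the paper's argument exactly.
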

\begin{proof}
First, we note that the constraint~\eqref{eq:thm_uncertain_matrix2} is equivalent to the implication
\begin{align*}
\begin{bmatrix}X\\U\end{bmatrix}\tilde{V}=0\quad\Rightarrow\quad (X_+-B_wW)\tilde{V}=0,
\end{align*}
for any matrix $\tilde{V}$ with $N$ rows.
By the Fredholm alternative, this is in turn equivalent to the existence of a solution $V$ to the system of linear equations
%
%
%
\begin{align}\label{eq:thm_uncertain_proof1}
V\begin{bmatrix}X\\U\end{bmatrix}=X_+-B_wW.
\end{align}
\textbf{Proof of $\mathbf{\Sigma_{X,U}^K\subseteq\mathcal{A}_G}$:}
Let $A_K\in\Sigma_{X,U}^K$, i.e., there exist matrices $A,B$ as well as $W\in\mathcal{W}$ such that
\begin{align}\label{eq:thm_uncertain_proof3}
A_K&=A+BK,\\\label{eq:thm_uncertain_proof4}
X_+&=AX+BU+B_wW.
\end{align}
Then, it follows that
\begin{align*}
A_K\stackrel{\eqref{eq:thm_uncertain_proof3}}{=}A+BK&=\begin{bmatrix}A&B\end{bmatrix}\begin{bmatrix}I\\K\end{bmatrix}
\stackrel{\eqref{eq:thm_uncertain_equality}}{=}\begin{bmatrix}A&B\end{bmatrix}\begin{bmatrix}X\\U\end{bmatrix}G\\
&\stackrel{\eqref{eq:thm_uncertain_proof4}}{=}(X_+-B_wW)G.
\end{align*}
It remains to show that $W$ satisfies~\eqref{eq:thm_uncertain_matrix2} or, equivalently, there exists $V$ such that~\eqref{eq:thm_uncertain_proof1} holds.
It follows directly from~\eqref{eq:thm_uncertain_proof4} that $V=\begin{bmatrix}A&B\end{bmatrix}$ solves~\eqref{eq:thm_uncertain_proof1}, which thus proves $A_K\in\mathcal{A}_G$.\\
\textbf{Proof of $\mathbf{\mathcal{A}_G\subseteq\Sigma_{X,U}^K}$:}
Let $A_G\in\mathcal{A}_G$, i.e., there exists $W\in\mathcal{W}$ such that~\eqref{eq:thm_uncertain_matrix1} and~\eqref{eq:thm_uncertain_matrix2} hold.
We need to show the existence of matrices $A,B$ as well as $\tilde{W}\in\mathcal{W}$ such that
\begin{align*}
A+BK&=(X_+-B_wW)G,\\
X_+&=AX+BU+B_w\tilde{W}.
\end{align*}
If we choose $\tilde{W}=W$, these equations are equivalent to
\begin{align*}
\begin{bmatrix}A&B\end{bmatrix}
\begin{bmatrix}X&I\\U&K\end{bmatrix}=(X_+-B_wW)\begin{bmatrix}
I&G\end{bmatrix}.
\end{align*}
Using~\eqref{eq:thm_uncertain_equality}, this is in turn equivalent to
\begin{align}\label{eq:thm_uncertain_proof5}
\begin{bmatrix}A&B\end{bmatrix}
\begin{bmatrix}X\\U\end{bmatrix}\begin{bmatrix}I&G\end{bmatrix}=(X_+-B_wW)\begin{bmatrix}
I&G\end{bmatrix}.
\end{align}
Since $A_G\in\mathcal{A}_G$, there exists a solution $V$ to~\eqref{eq:thm_uncertain_proof1}.
Hence, the choice $\begin{bmatrix}A&B\end{bmatrix}=V$ satisfies~\eqref{eq:thm_uncertain_proof5}, which implies $A_G\in\Sigma_{X,U}^K$.
\end{proof}

Theorem~\ref{thm:uncertain} provides an exact parametrization of the uncertain closed loop under a fixed state-feedback $K$, using a single \emph{open-loop} trajectory of the unknown system.
In particular, no closed-loop measurements and no model knowledge are required to construct the set $\mathcal{A}_G$, which parametrizes the uncertain closed loop.
This set relies on fixed data matrices $X$ and $U$, which are obtained offline, and is parametrized via the disturbance $W\in\mathcal{W}$ satisfying~\eqref{eq:thm_uncertain_matrix2}.
The equation~\eqref{eq:thm_uncertain_matrix2} ensures that the matrices in $\mathcal{A}_G$ contain only those $W\in\mathcal{W}$ for which there exist matrices $A,B$ satisfying the system dynamics.

In general, the condition~\eqref{eq:thm_uncertain_equality} only requires that $X$ has full row rank, but not necessarily that the data are persistently exciting.
Nevertheless, if $\{x_k,u_k\}_{k=0}^{N-1}$ is persistently exciting, then, for any state-feedback $K$,~\eqref{eq:thm_uncertain_equality} can be solved for $G$, i.e., any possible closed-loop matrix can be constructed.
Equivalently, the set of all $A_G\in\mathcal{A}_G$ with $G\in\mathbb{R}^{N\times n}$ satisfying $XG=I$ is equal to the set of all possible closed-loop matrices under state-feedback.
\begin{corollary}\label{cor:sets}
If $\{x_k,u_k\}_{k=0}^{N-1}$ is persistently exciting, then it holds that
\begin{align}\label{eq:cor_sets}
\begin{split}
\{A_G\in\mathcal{A}_G\mid G\in &\>\mathbb{R}^{N\times n},XG=I\}\\
=&\{A_K\in\Sigma_{X,U}^K\mid K\in\mathbb{R}^{m\times n}\}.
\end{split}
\end{align}
\end{corollary}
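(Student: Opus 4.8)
The plan is to use Theorem~\ref{thm:uncertain} as a bridge between the two parametrizations and to establish the set equality~\eqref{eq:cor_sets} by proving two inclusions. The only role of persistence of excitation, i.e., the full row rank of $\begin{bmatrix}X\\U\end{bmatrix}$, will be to guarantee solvability of a linear matrix equation in one of the two inclusions.

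For the inclusion ``$\subseteq$'', I would take an arbitrary $A_G$ in the left-hand side, so that $A_G\in\mathcal{A}_G$ for some $G$ with $XG=I$. The key observation is that the defining condition $XG=I$ is exactly the top block of the hypothesis~\eqref{eq:thm_uncertain_equality} of Theorem~\ref{thm:uncertain}. Hence I would simply set $K\coloneqq UG$, so that $\begin{bmatrix}X\\U\end{bmatrix}G=\begin{bmatrix}I\\K\end{bmatrix}$ holds by construction. Theorem~\ref{thm:uncertain} then yields $\mathcal{A}_G=\Sigma_{X,U}^K$, and therefore $A_G\in\Sigma_{X,U}^K$ with $K\in\mathbb{R}^{m\times n}$, placing $A_G$ in the right-hand side. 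Notably, this inclusion does not require persistence of excitation.

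For the reverse inclusion ``$\supseteq$'', I would take an arbitrary $A_K\in\Sigma_{X,U}^K$ for some gain $K$. Here persistence of excitation enters: since $\begin{bmatrix}X\\U\end{bmatrix}$ has full row rank, the matrix equation~\eqref{eq:thm_uncertain_equality} with right-hand side $\begin{bmatrix}I\\K\end{bmatrix}$ admits a solution $G\in\mathbb{R}^{N\times n}$. This $G$ automatically satisfies $XG=I$ through its top block, and Theorem~\ref{thm:uncertain} then gives $\Sigma_{X,U}^K=\mathcal{A}_G$, so that $A_K\in\mathcal{A}_G$ with $XG=I$, i.e., $A_K$ lies in the left-hand side.

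I expect the only substantive step to be the solvability argument in the ``$\supseteq$'' direction, where full row rank makes the map $G\mapsto\begin{bmatrix}X\\U\end{bmatrix}G$ surjective onto matrices with $n+m$ rows, so that the prescribed right-hand side $\begin{bmatrix}I\\K\end{bmatrix}$ can always be matched; the remainder is bookkeeping built directly on Theorem~\ref{thm:uncertain}.
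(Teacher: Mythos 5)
Your proof is correct and is exactly the elaboration of the paper's own (one-line) argument, which likewise derives the set equality directly from Theorem~\ref{thm:uncertain} by matching each $G$ with $XG=I$ to the gain $K=UG$ and, conversely, using full row rank of $\begin{bmatrix}X\\U\end{bmatrix}$ to solve~\eqref{eq:thm_uncertain_equality} for $G$ given any $K$. Your observation that persistence of excitation is only needed for the ``$\supseteq$'' inclusion also matches the paper's remark that without it, \eqref{eq:cor_sets} holds with ``$\subseteq$'' in place of equality.
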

\begin{proof}
This follows directly from Theorem~\ref{thm:uncertain}.
\end{proof}
%
%
%
Corollary~\ref{cor:sets} suggests that the set $\mathcal{A}_G$ can be employed to design controllers with robustness guarantees for all closed-loop matrices in $\Sigma_{X,U}^K$, by optimizing over the parameter $G$ instead of the gain $K$.
If the data are not persistently exciting, then~\eqref{eq:cor_sets} holds with "$\subseteq$" instead of "$=$", since $\mathcal{A}_G$ contains only closed-loop matrices resulting from feedback gains $K$ of the form $K=UG$ (compare~\eqref{eq:thm_uncertain_equality}).
In this case, Theorem~\ref{thm:uncertain} can still be used for robust controller design since, for any \emph{fixed} $K=UG$, $\mathcal{A}_G$ captures the full closed-loop uncertainty induced by the noise, i.e., $\mathcal{A}_G=\Sigma_{X,U}^K$.
However, the conservatism of robust controller design based on Theorem~\ref{thm:uncertain} increases if the data are not persistently exciting since there may exist a controller $K$ which, e.g., renders all matrices in $\Sigma_{X,U}^K$ stable, but for which there exists no $G$ satisfying~\eqref{eq:thm_uncertain_equality}.

The disturbance $W$ parametrizing $\mathcal{A}_G$ is not only restricted by $W\in\mathcal{W}$ but also via the affine constraint~\eqref{eq:thm_uncertain_matrix2} and therefore, the construction of $\mathcal{A}_G$ requires the computation of the kernel of $\begin{bmatrix}X^\top&U^\top\end{bmatrix}^\top$, which may be undesirable from a numerical viewpoint.
In Sections~\ref{sec:state_feedback_stability} and~\ref{sec:state_feedback_performance}, we employ a superset of $\mathcal{A}_G$ to derive simple robust controller design procedures for closed-loop stability and performance, respectively.



\subsection{Robust state-feedback for stability}\label{sec:state_feedback_stability}

In this section, we apply known robust control methods to render all matrices in $\mathcal{A}_G$ stable.
To facilitate the design, we consider
\begin{align}\label{eq:uncertain_set_relaxed}
\mathcal{A}_G^s=\{A_G\mid A_G=(X_+-B_wW)G,\>W\in\mathcal{W}\},
\end{align}
which is a superset of the uncertain closed loop $\mathcal{A}_G$, i.e., $\mathcal{A}_G\subseteq\mathcal{A}_G^s$.
The difference between $\mathcal{A}_G^s$ and $\mathcal{A}_G$ is that the latter considers only those disturbances $W\in\mathcal{W}$, which satisfy the $n$ constraints defined by~\eqref{eq:thm_uncertain_matrix2}.
Hence, $\mathcal{A}_G^s$ is in general larger than $\mathcal{A}_G$ and, therefore, controller design based on $\mathcal{A}_G^s$ is generally more conservative than a design based on $\mathcal{A}_G$.
Nevertheless, $\mathcal{A}_G^s$ admits a simpler parametrization and can be translated directly into a standard robust control format.
Further, as we will see in Section~\ref{sec:example}, considering $\mathcal{A}_G^s$ instead of $\mathcal{A}_G$ leads to meaningful robust controllers also for practical examples.
Providing an exact quantification of the conservatism induced by this replacement will be the subject of future research.

In the following, we exploit that the parametrization $\mathcal{A}_G^s$ is equivalent to a particular lower linear fractional transformation (LFT) (compare~\cite[Chapter 10]{Zhou96}).
%
%
To be more precise, 
the matrices in $\mathcal{A}_G^s$ can be described as a lower LFT of a nominal closed-loop system depending on $G$ with the disturbance $W$, i.e.,
\begin{align}\label{eq:LFT_W}
\begin{split}
\begin{bmatrix}x_{k+1}\\\tilde{z}_k\end{bmatrix}&=
\begin{bmatrix}X_+G&B_w\\-G&0\end{bmatrix}
\begin{bmatrix}x_k\\\tilde{w}_k\end{bmatrix},\\
\tilde{w}_k&=W\tilde{z}_k,
\end{split}
\end{align}
%
where $W\in\mathcal{W}$.
It follows from Theorem~\ref{thm:uncertain} that, if $G$ satisfies $XG=I$, the above LFT contains all potential closed-loop systems under control with state-feedback $K=UG$.
The following result exploits this fact by using robust control methods to design a stabilizing controller parameter $G$ for the LFT~\eqref{eq:LFT_W}, which hence stabilizes all elements of $\Sigma_{X,U}^K$.

\begin{corollary}\label{cor:stab}
If there exist $\mathcal{X}\succ0,G\in\mathbb{R}^{N\times n}$ such that
\begin{align}\label{eq:cor_stab_equality}
XG&=I
\end{align}
as well as
\begin{align}\label{eq:cor_stab_LMI}
\mleft[
\begin{array}{cc}*&*\\ *&*\\\hline *&*\\ *&*
\end{array}
\mright]^\top
\mleft[
\begin{array}{cc|cc}-\mathcal{X}&0&0&0\\
0&\mathcal{X}&0&0\\\hline
0&0&Q_w&S_w\\0&0&S_w^\top&R_w
\end{array}
\mright]
&\mleft[
\begin{array}{cc}
I&0\\X_+G&B_w\\\hline0&I\\-G&0
\end{array}
\mright]\prec0,
\end{align}
then $A+BK$ with $K=UG$ is stable for all $(A,B)\in\Sigma_{X,U}$.
\end{corollary}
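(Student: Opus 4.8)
The plan is to certify robust Schur stability of the entire closed-loop uncertainty set by a single common Lyapunov function $V(x)=x^\top\mathcal{X}x$, combined with an S-procedure argument that absorbs the disturbance constraint through the multiplier $\begin{bmatrix}Q_w&S_w\\S_w^\top&R_w\end{bmatrix}$. First I would reduce the claim to a statement about $\mathcal{A}_G^s$. Since~\eqref{eq:cor_stab_equality} and $K=UG$ give $\begin{bmatrix}X\\U\end{bmatrix}G=\begin{bmatrix}I\\UG\end{bmatrix}=\begin{bmatrix}I\\K\end{bmatrix}$, Theorem~\ref{thm:uncertain} applies and yields $\Sigma_{X,U}^K=\mathcal{A}_G\subseteq\mathcal{A}_G^s$. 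Hence any $A+BK$ with $(A,B)\in\Sigma_{X,U}$ equals $(X_+-B_wW)G$ for some $W\in\mathcal{W}$, and it suffices to show that every such $A_G$ satisfies the discrete-time Lyapunov inequality $A_G^\top\mathcal{X}A_G-\mathcal{X}\prec0$.

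The central step is to recognize that the outer factor of~\eqref{eq:cor_stab_LMI} is exactly the realization of the LFT~\eqref{eq:LFT_W}, stacked as $(x_k,x_{k+1},\tilde w_k,\tilde z_k)$, and then to close the uncertainty loop $\tilde w_k=W\tilde z_k$. Concretely, I would fix $W\in\mathcal{W}$ and perform a congruence of~\eqref{eq:cor_stab_LMI} with the matrix $N=\begin{bmatrix}I\\-WG\end{bmatrix}$, i.e. multiply from the right by $N$ and from the left by $N^\top$. Along any closed-loop trajectory $\tilde w_k=W\tilde z_k=-WGx_k$, so this substitution collapses the product of the outer factor with $N$ into the single generator $T=\begin{bmatrix}I\\(X_+-B_wW)G\\-WG\\-G\end{bmatrix}=\begin{bmatrix}I\\A_G\\-WG\\-G\end{bmatrix}$. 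Because $N$ has full column rank (its top block is $I$), the congruence preserves strict negative definiteness, giving $T^\top M T\prec0$, where $M$ is the block-diagonal middle matrix of~\eqref{eq:cor_stab_LMI}.

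Evaluating the two diagonal blocks of $T^\top M T$ separately, the Lyapunov block contributes $A_G^\top\mathcal{X}A_G-\mathcal{X}$, while the disturbance block contributes $G^\top\begin{bmatrix}W\\I\end{bmatrix}^\top\begin{bmatrix}Q_w&S_w\\S_w^\top&R_w\end{bmatrix}\begin{bmatrix}W\\I\end{bmatrix}G$, which is positive semidefinite by Assumption~\ref{ass:disturbance_bound} since $W\in\mathcal{W}$. This is the S-procedure step: as $T^\top M T\prec0$ is the sum of these two terms and the disturbance term is $\succeq0$, we conclude $A_G^\top\mathcal{X}A_G-\mathcal{X}\prec0$. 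Together with $\mathcal{X}\succ0$, this certifies Schur stability of $A_G$, and since $W\in\mathcal{W}$ was arbitrary, of every element of $\mathcal{A}_G^s\supseteq\Sigma_{X,U}^K$.

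I expect the main obstacle to be bookkeeping rather than conceptual: one must verify that $N=\begin{bmatrix}I\\-WG\end{bmatrix}$ reproduces the closed-loop generator $T$ from the outer factor exactly (so that the loop $\tilde w_k=W\tilde z_k$ is correctly enforced) and that the sign convention in the block $\begin{bmatrix}-\mathcal{X}&0\\0&\mathcal{X}\end{bmatrix}$ produces the decreasing Lyapunov difference with the right orientation. The only genuinely nontrivial ingredient is the S-procedure interpretation, which converts the parametrized family $\{W\in\mathcal{W}\}$ into a single semidefinite term via the quadratic matrix inequality defining $\mathcal{W}$.
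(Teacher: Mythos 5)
Your proposal is correct and follows exactly the route the paper intends: the paper's proof is a one-line appeal to ``known robust control methods'' applied to the LFT~\eqref{eq:LFT_W} (citing \cite{Scherer00,Scherer00b}), and your argument---reducing to $\mathcal{A}_G^s$ via Theorem~\ref{thm:uncertain}, closing the loop with the full-column-rank factor $\begin{bmatrix}I\\-WG\end{bmatrix}$, and using the quadratic constraint on $W$ as an S-procedure term to extract the common Lyapunov inequality $A_G^\top\mathcal{X}A_G-\mathcal{X}\prec0$---is precisely that standard method, spelled out. The only difference is that you supply the details the paper delegates to a citation, and your bookkeeping (the collapse of the outer factor to $T$ and the cancellation of signs in the multiplier block) checks out.
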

\begin{proof}
This follows from an application of known robust control methods to the system~\eqref{eq:LFT_W} (cf.~\cite{Scherer00,Scherer00b}).
\end{proof}

Corollary~\ref{cor:stab} applies known robust control methods to design state-feedback controllers which robustly stabilize all elements of $\mathcal{A}_G^s$.
If $K$ is designed according to Corollary~\ref{cor:stab}, then~\eqref{eq:thm_uncertain_equality} holds and hence, Theorem~\ref{thm:uncertain} leads to $\Sigma_{X,U}^K=\mathcal{A}_G$ which thus implies $\Sigma_{X,U}^K=\mathcal{A}_G\subseteq\mathcal{A}_G^s$.
This guarantees stability of all closed-loop matrices $\Sigma_{X,U}^K$ that are consistent with the measured data.
Similar to Theorem~\ref{thm:uncertain}, Corollary~\ref{cor:stab} does not require persistently exciting data explicitly.
Thus, it may be possible to find a controller $K$ which stabilizes all elements of $\Sigma_{X,U}^K$, even if persistence of excitation does not hold, i.e., if the data is not sufficiently rich for system identification.
Similar phenomena were analyzed for system analysis and control from noise-free data in~\cite{Waarde19}, where also full row rank of $X$ was sufficient to design stabilizing controllers from data.

Nevertheless, persistence of excitation is required for equality in~\eqref{eq:cor_sets}, i.e., to construct any closed-loop system (cf. Corollary~\ref{cor:sets}), and thus, it enhances feasibility of~\eqref{eq:cor_stab_LMI}.
In particular, if the data are persistently exciting and there exists a controller which stabilizes all matrices in $\mathcal{A}_G^s$ with a common Lyapunov function, then~\eqref{eq:cor_stab_equality} and~\eqref{eq:cor_stab_LMI} are feasible.
Hence, Corollary~\ref{cor:stab} contains two main sources of conservatism:
a) the difference between $\mathcal{A}_G^s$ and $\Sigma_{X,U}^K$ and b) the fact that a \emph{common} Lyapunov function is employed for stabilization, similar to simple model-based robust controller design methods.
Nevertheless, Corollary~\ref{cor:stab} provides computationally tractable conditions, based directly on open-loop data, to design controllers with stability guarantees.

\begin{remark}\label{rk:LMI}
Although~\eqref{eq:cor_stab_LMI} is not an LMI, it is routine to transform it into one following the same steps as in model-based robust state-feedback design (compare~\cite{Scherer00,Scherer00b}).
To be more precise, after performing a congruence transformation on~\eqref{eq:cor_stab_LMI} with $\text{diag}(\mathcal{X}^{-1},I)$ and applying the Schur complement twice, the nonlinear matrix inequality~\eqref{eq:cor_stab_LMI} leads to the LMI
\begin{align}\label{eq:cor_stab_LMI_Schur}
\begin{bmatrix}-\mathcal{Y}&-M^\top S_w^\top&M^\top X_+^\top &M^\top\\
-S_wM&Q_w&B_w^\top& 0\\
X_+M&B_w&-\mathcal{Y}&0\\
M&0&0&-R_w^{-1}\end{bmatrix}\prec0
\end{align}
in the variables $\mathcal{Y}=\mathcal{X}^{-1},M=G\mathcal{X}^{-1}$.
Further, multiplying~\eqref{eq:cor_stab_equality} by $\mathcal{Y}$ from the right yields the linear equality constraint $XM=\mathcal{Y}$.
Together with the LMI~\eqref{eq:cor_stab_LMI_Schur}, this leads to a semidefinite program whose feasibility can be checked using standard solvers.
The stabilizing state-feedback gain can then be recovered as $K=UM\mathcal{Y}^{-1}$.
\end{remark}

Corollary~\ref{cor:stab} suggests a valuable alternative to sequential system identification and stabilizing robust control.
In particular, in the presence of deterministic noise, identification-based methods are usually either computationally intractable, overly conservative, or they admit no guarantees from finite data.
Essentially, Corollary~\ref{cor:stab} is a computationally tractable alternative to robust controller design based on set membership estimation, which relies on an explicit construction of the set $\Sigma_{X,U}$~\cite{Milanese91}.
Further alternatives include unfalsification-based control, which typically requires infinitely long data and a prescribed controller structure for closed-loop guarantees~\cite{Kosut01,Safonov97}, or a stochastic setting, where recent work has addressed finite-time guarantees on system identification with sequential robust control~\cite{Matni19b,Matni19,Boczar18,Dean19}.
The latter results are based on sophisticated statistical analysis and many of them rely on restrictive assumptions, such as the availability of multiple independent data trajectories, each of which only supplies one data tuple to the estimator.
On the contrary, our approach relies on simple matrix manipulations combined with existing robust control methods and requires only a single data trajectory of finite length.
Despite these advantages, the presented approach requires state measurements which may be restrictive in practice.
Extending the results in this paper to input-output data is an important aspect for future research.

\begin{remark}
For the state-feedback stabilization problem under additive state measurement noise,~\cite{Persis19} provides sufficient conditions for closed-loop stability.
However, this result relies on assumptions that cannot be verified from measured data.
Moreover, in contrast to the approach of~\cite{Persis19}, an extension of Corollary~\ref{cor:stab} to more general (robust) control objectives is straightforward.
\end{remark}

\subsection{Robust state-feedback for performance}\label{sec:state_feedback_performance}

Next, we consider the system~\eqref{eq:sys} including the performance channel $w\mapsto z$.
The goal is to use data $\{x_k,u_k\}_{k=0}^{N}$ of~\eqref{eq:sys}, affected by noise satisfying Assumption~\ref{ass:disturbance_bound}, in order to design $K$ such that the closed-loop matrix $A_K$ is stable and the following quadratic performance specification on~\eqref{eq:sys} is guaranteed for all $A_K\in\Sigma_{X,U}^K$.
\begin{definition}\label{def:quad_perf}
We say that the closed-loop system~\eqref{eq:sys} with state feedback $u_k=Kx_k$ satisfies quadratic performance with index $P=\begin{bmatrix}Q&S\\S^\top&R\end{bmatrix}$, where $R\succeq0$, if there exists an $\varepsilon>0$ such that
\begin{align}\label{eq:quad_perf}
\sum_{k=0}^\infty \begin{bmatrix}w_k\\z_k\end{bmatrix}^\top
\begin{bmatrix}Q&S\\S^\top&R\end{bmatrix}\begin{bmatrix}w_k\\z_k\end{bmatrix}\leq-\varepsilon\sum_{k=0}^\infty w_k^\top w_k
\end{align}
for all $w\in\ell_2$.
\end{definition}
Important special cases of Definition~\ref{def:quad_perf} are $Q=-\gamma^2 I,S=0,R=I$ for the $\mathcal{H}_\infty$-control problem and $Q=0,S=-I,R=0$ for closed-loop strict passivity.
Note that the disturbance $w$ enters the present problem setting in two different ways.
First, it perturbs the measured input-state trajectory during the initial data generation for which $w$ is bounded as $W\in\mathcal{W}$.
Second, it enters the control objective of achieving quadratic performance of the channel $w\mapsto z$.
For instance, a desired $\mathcal{H}_\infty$-performance of this channel corresponds to a robustness objective for the closed loop with respect to noise.

Similar to Section~\ref{sec:state_feedback_stability}, the uncertain closed loop of~\eqref{eq:sys}, including the performance channel $w\mapsto z$, can be written as a lower LFT.
To be more precise, for a state-feedback gain $K=UG$, where $G$ satisfies $I=XG$, a superset of the uncertain closed loop from $w$ to $z$ can be parametrized as
%
\begin{align}\nonumber
\mleft[
\begin{array}{c}x_{k+1}\\\hline z_k\\\tilde{z}_k\end{array}
\mright]&=
\mleft[
\begin{array}{c|cc}X_+G&B_w&B_w\\\hline
C+DUG&D_w&0\\-G&0&0\end{array}
\mright]
\mleft[
\begin{array}{c}x_k\\\hline w_k\\\tilde{w}_k\end{array}
\mright],\\\label{eq:LFT_W_performance}
\tilde{w}_k&=W\tilde{z}_k,
\end{align}
for $W\in\mathcal{W}$.
The above system contains two disturbance inputs: $w$ to model the performance channel $w\mapsto z$, representing the control objective of closed-loop quadratic performance, and  $\tilde{w}$ to model the uncertainty originating from the noisy data, similar to the LFT~\eqref{eq:LFT_W}.
The following result derives state-feedback controllers with robust performance for~\eqref{eq:LFT_W_performance}.

\begin{corollary}\label{cor:performance}
If there exist $\mathcal{X}\succ0,G\in\mathbb{R}^{N\times n},\lambda>0,$ such that~\eqref{eq:cor_performance_LMI} and
\begin{figure*}[b]
\noindent\makebox[\linewidth]{\rule{\textwidth}{0.4pt}}
\vspace{2pt}
\begin{align}\label{eq:cor_performance_LMI}
\mleft[
\begin{array}{ccc}I&0&0\\X_+G&B_w&B_w\\\hline
0&I&0\\C+DUG&D_w&0\\\hline0&0&I\\-G&0&0\end{array}
\mright]^\top
\mleft[
\begin{array}{cc|cc|cc}
-\mathcal{X}&0&0&0&0&0\\
0&\mathcal{X}&0&0&0&0\\\hline
0&0&Q&S&0&0\\
0&0&S^\top&R&0&0\\\hline
0&0&0&0&\lambda Q_w&\lambda S_w\\
0&0&0&0&\lambda S_w^\top&\lambda R_w
\end{array}\mright]
\mleft[
\begin{array}{ccc}I&0&0\\X_+G&B_w&B_w\\\hline
0&I&0\\C+DUG&D_w&0\\\hline0&0&I\\-G&0&0\end{array}
\mright]
\prec0
\end{align}
\end{figure*}
\begin{align}\label{eq:cor_performance_equality}
XG&=I
\end{align} 
hold, then, for any $(A,B)\in\Sigma_{X,U}$,
\begin{itemize}
\item[i)] $A+BK$ with $K=UG$ is stable,
\item[ii)] \eqref{eq:sys} with $u_k=Kx_k$ satisfies quadratic performance with index $P$.
\end{itemize}
\end{corollary}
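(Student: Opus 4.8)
The plan is to reproduce the standard full-block S-procedure argument for robust state-feedback performance (cf.~\cite{Scherer00,Scherer00b}), but applied to the purely data-based LFT~\eqref{eq:LFT_W_performance} in place of a model. The first step reduces the robustness claim over the whole set $\Sigma_{X,U}$ to a statement about the single uncertainty block $W\in\mathcal{W}$ entering~\eqref{eq:LFT_W_performance}. Since~\eqref{eq:cor_performance_equality} gives $XG=I$, the pair $(G,K)$ with $K=UG$ satisfies the hypothesis~\eqref{eq:thm_uncertain_equality} of Theorem~\ref{thm:uncertain}, whence $\Sigma_{X,U}^K=\mathcal{A}_G\subseteq\mathcal{A}_G^s$. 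Thus, for every $(A,B)\in\Sigma_{X,U}$ there exists $W\in\mathcal{W}$ with $A+BK=(X_+-B_wW)G$. Closing the loop $\tilde{w}_k=W\tilde{z}_k=-WGx_k$ in~\eqref{eq:LFT_W_performance} then reproduces exactly the true closed-loop state update $x_{k+1}=(A+BK)x_k+B_ww_k$ and performance output $z_k=(C+DK)x_k+D_ww_k$; the interconnection is automatically well posed because $\tilde{z}_k$ does not depend on $\tilde{w}_k$.

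The second step is a dissipation estimate. I would take the storage function $V(x)=x^\top\mathcal{X}x$ with $\mathcal{X}\succ0$ and left/right-multiply~\eqref{eq:cor_performance_LMI} by $\begin{bmatrix}x_k^\top&w_k^\top&\tilde{w}_k^\top\end{bmatrix}$ and its transpose. Since the outer factors evaluate the three stacked channels at $(x_{k+1},z_k,\tilde{z}_k)$, and abbreviating $\Pi_w\coloneqq\begin{bmatrix}Q_w&S_w\\S_w^\top&R_w\end{bmatrix}$, this yields
\begin{align*}
V(x_{k+1})-V(x_k)&+\begin{bmatrix}w_k\\z_k\end{bmatrix}^\top P\begin{bmatrix}w_k\\z_k\end{bmatrix}\\
&+\lambda\begin{bmatrix}\tilde{w}_k\\\tilde{z}_k\end{bmatrix}^\top\Pi_w\begin{bmatrix}\tilde{w}_k\\\tilde{z}_k\end{bmatrix}<0.
\end{align*}
Substituting $\tilde{w}_k=W\tilde{z}_k$ turns the last term into $\lambda\,\tilde{z}_k^\top\begin{bmatrix}W\\I\end{bmatrix}^\top\Pi_w\begin{bmatrix}W\\I\end{bmatrix}\tilde{z}_k\geq0$, using $W\in\mathcal{W}$ (Assumption~\ref{ass:disturbance_bound}) and $\lambda>0$. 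Dropping this nonnegative term leaves the clean inequality $V(x_{k+1})-V(x_k)+\begin{bmatrix}w_k\\z_k\end{bmatrix}^\top P\begin{bmatrix}w_k\\z_k\end{bmatrix}<0$ along every closed-loop trajectory.

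The two assertions then follow by routine arguments. For claim i), set $w_k=0$; as $R\succeq0$ the performance term is nonnegative, so $V(x_{k+1})<V(x_k)$ whenever $x_k\neq0$, and $V$ serves as a Lyapunov function certifying stability of $A+BK$. For claim ii), strictness of~\eqref{eq:cor_performance_LMI} allows me to perturb the $(w,w)$-block $Q$ to $Q+\varepsilon I$ for some small $\varepsilon>0$ while keeping the inequality strict; multiplying out and dropping the $\lambda$-term as above gives $V(x_{k+1})-V(x_k)+\begin{bmatrix}w_k\\z_k\end{bmatrix}^\top P\begin{bmatrix}w_k\\z_k\end{bmatrix}\leq-\varepsilon w_k^\top w_k$. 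Summing over $k\geq0$ with $x_0=0$ telescopes $V$, and nonnegativity of $V$ yields exactly the quadratic-performance bound~\eqref{eq:quad_perf} of Definition~\ref{def:quad_perf}.

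I expect the crux to be conceptual rather than computational: it lies in the first step, namely invoking Theorem~\ref{thm:uncertain} through the equality $XG=I$ to collapse the entire data-consistency set $\Sigma_{X,U}$ into the single quadratically-bounded block $W\in\mathcal{W}$, and in recognizing that the relaxation $\mathcal{A}_G\subseteq\mathcal{A}_G^s$ together with the scaled multiplier $\lambda\Pi_w$ is precisely what the LFT~\eqref{eq:LFT_W_performance} and the LMI~\eqref{eq:cor_performance_LMI} certify. The remaining genuinely technical point, extracting the performance margin $\varepsilon$ from the strict LMI, is a standard continuity argument.
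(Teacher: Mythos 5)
Your proposal is correct and follows essentially the same route as the paper, which simply cites the standard robust control machinery of~\cite{Scherer00,Scherer00b} applied to the LFT~\eqref{eq:LFT_W_performance}: reduce $\Sigma_{X,U}$ to the block $W\in\mathcal{W}$ via Theorem~\ref{thm:uncertain} and $XG=I$, then run the dissipation/S-procedure argument with storage $x^\top\mathcal{X}x$ and multiplier $\lambda\Pi_w$. Your write-up is a faithful expansion of exactly that argument, including the correct handling of well-posedness, the sign of the multiplier term, and the extraction of the margin $\varepsilon$ from strictness of~\eqref{eq:cor_performance_LMI}.
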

\begin{proof}
The result follows from known robust control methods (cf.~\cite{Scherer00,Scherer00b}).
\end{proof}

Corollary~\ref{cor:performance} applies robust control methods to design controllers which guarantee robust closed-loop performance for all matrices in $\mathcal{A}_G^s$ and hence, according to Theorem~\ref{thm:uncertain}, for all closed-loop matrices $\Sigma_{X,U}^K$ consistent with the data (compare the discussion after Corollary~\ref{cor:stab}).
This implies that the closed-loop channel $w\mapsto z$ satisfies quadratic performance over an \emph{infinite} time-horizon for \emph{arbitrary} disturbance inputs which are not required to satisfy a bound of the form $W\in\mathcal{W}$, compare Definition~\ref{def:quad_perf}.
In order to achieve this goal, a data trajectory of \emph{finite} length and the (finite-horizon) assumption $\hat{W}\in\mathcal{W}$ on the disturbance generating the data are sufficient.
It is straightforward to extend Corollary~\ref{cor:performance} to design controllers with performance guarantees for the channel $w^p\mapsto z$, where $w^p$ is an exogenous input different from the noise perturbing the initial data trajectory, i.e., $w^p\neq w$.
Further, following the same steps as in Remark~\ref{rk:LMI}, the nonlinear matrix inequality~\eqref{eq:cor_performance_LMI} can be transformed into~\eqref{eq:cor_performance_LMI_Schur}, which is an LMI in the variables $\mathcal{Y}=\mathcal{X}^{-1},M=G\mathcal{X}^{-1}$ for a fixed multiplier $\lambda$.
Thus, the proposed feasibility problem can be solved via a line-search over $\lambda$.
\begin{figure*}[b]
\noindent\makebox[\linewidth]{\rule{\textwidth}{0.4pt}}
\vspace{2pt}
\begin{align}\label{eq:cor_performance_LMI_Schur}
\begin{bmatrix}-\mathcal{Y}&(C\mathcal{Y}+DUM)^\top(RD_w+S^\top)&-\lambda M^\top S_w^\top&M^\top X_+^\top &(C\mathcal{Y}+DUM)^\top & M^\top\\
*&Q+SD_w+D_w^\top S^\top+D_w^\top RD_w&0&B_w^\top&0&0\\
*&*&\lambda Q_w&B_w^\top&0&0\\
*&*&*&-\mathcal{Y}&0&0\\
*&*&*&*&-R^{-1}&0\\
*&*&*&*&*&-(\lambda R_w)^{-1}\end{bmatrix}\prec0
\end{align}
\end{figure*}

\subsection{Systems with partial model knowledge}\label{sec:model}
We conclude the section by presenting an extension of the proposed framework to systems with mixed data-driven and model-based components.
To this end, we consider systems of the form
\begin{align}\label{eq:sys_conn}
\mleft[
\begin{array}{c}x_{k+1}\\\tilde{x}_{k+1}\\\hline
z_k
\end{array}
\mright]=
\mleft[\begin{array}{cc|cc}A_{1}&A_2&B_{w1}&B_{1}\\
A_3&A_4&B_{w2}&B_2\\\hline
C_{1}&C_{2}&D_{w}&D
\end{array}
\mright]
\mleft[
\begin{array}{c}x_{k}\\\tilde{x}_k\\\hline
w_k\\ u_k
\end{array}
\mright],
\end{align}
where the matrices $A_1$ and $B_1$ are unknown, but all other matrices occurring in~\eqref{eq:sys_conn} are known.
Further, a single open-loop data trajectory $\{x_k,\tilde{x}_k,u_k\}_{k=0}^{N-1}$, which is perturbed by some unknown disturbance realization $\hat{W}\in\mathcal{W}$, is available.

In the following, we consider the closed loop of~\eqref{eq:sys_conn} under control with state-feedback $u_k=K_1x_k+K_2\tilde{x}_k$.
Suppose there exist matrices $G_1\in\mathbb{R}^{N\times n},G_2\in\mathbb{R}^{N\times\tilde{n}}$, where $n$ and $\tilde{n}$ are the dimensions of $x_k$ and $\tilde{x}_k$, respectively, such that
\begin{align}\label{eq:combined_equality}
\begin{bmatrix}I&0\\K_1&K_2\end{bmatrix}=
\begin{bmatrix}X\\U\end{bmatrix}\begin{bmatrix}G_1&G_2\end{bmatrix}.
\end{align}
Multiplying~\eqref{eq:combined_equality} from the left by $\begin{bmatrix}A_1&B_1\end{bmatrix}$, we obtain
\begin{align*}
A_1+B_1K_1&=(X_+-A_2\tilde{X}-B_{w1}\hat{W})G_1,\\
B_1K_2&=(X_+-A_2\tilde{X}-B_{w1}\hat{W})G_2.
\end{align*}
These relations allow us to replace all occurrences of the unknown matrices $A_1$ and $B_1$ in the closed-loop dynamics.
Thus, following the same steps as in the previous sections, we obtain the LFT~\eqref{eq:LFT_W_conn} with $W\in\mathcal{W}$, which parametrizes a superset of the uncertain closed loop dynamics of~\eqref{eq:sys_conn} under the above state-feedback.
\begin{figure*}[b]
\noindent\makebox[\linewidth]{\rule{\textwidth}{0.4pt}}
\vspace{2pt}
\begin{align}\label{eq:LFT_W_conn}
\begin{split}
\mleft[
\begin{array}{c}x_{k+1}\\\tilde{x}_{k+1}\\\hline z_k\\\tilde{z}_k\end{array}
\mright]&=
\mleft[
\begin{array}{cc|cc}(X_+-A_2\tilde{X})G_1&A_2+(X_+-A_2\tilde{X})G_2&B_{w1}&B_{w1}\\
A_3+B_2UG_1&A_4+B_2UG_2&B_{w2}&0\\\hline
C_1+DUG_1&C_2+DUG_2&D_w&0\\-G_1&-G_2&0&0\end{array}
\mright]
\mleft[
\begin{array}{c}x_k\\\tilde{x}_k\\\hline w_k\\\tilde{w}_k\end{array}
\mright]\\
\tilde{w}_k&=W\tilde{z}_k
\end{split}
\end{align}
\end{figure*}
Note that this LFT depends only on known matrices and the open-loop data trajectory $\{x_k,\tilde{x}_k,u_k\}_{k=0}^{N-1}$.
The structure of~\eqref{eq:LFT_W_conn} resembles that of the LFT~\eqref{eq:LFT_W_performance} and therefore, robust controllers for the mixed system~\eqref{eq:sys_conn} can be derived by proceeding as in Section~\ref{sec:state_feedback_performance}.

In contrast to the previous sections, the condition~\eqref{eq:combined_equality} requires not only that $X$ has full row rank but also that $N\geq n+\tilde{n}$.
Moreover, if $\begin{bmatrix}X^\top &U^\top\end{bmatrix}^\top$ has full row rank, i.e., the data-driven component of~\eqref{eq:sys_conn} is persistently exciting, and $N\geq n+\tilde{n}$, then, for any matrices $K_1$ and $K_2$, there exist matrices $G_1$ and $G_2$ satisfying~\eqref{eq:combined_equality}, i.e., any controller can be constructed.

\begin{remark}
Our original motivation for considering the above mixed data-driven and model-based configuration comes from $\mathcal{H}_\infty$-loop-shaping:
The $\mathcal{H}_\infty$-control problem is usually not solved for the performance channel $w\mapsto z$ directly, but rather for the channel $w\mapsto z^f$, where $z^f$ is the output of a filter with input $z$.
In this scenario, the known components of~\eqref{eq:sys_conn} are mainly that of the filter, whereas the unknown matrices $(A_1,B_1)$ are equal to $(A_{tr},B_{tr})$ from~\eqref{eq:sys}.
Notably, $A_2=0$ holds in this case and hence, it can be seen from~\eqref{eq:LFT_W_conn} that measured data of the filter state $\tilde{x}$ is not required.
To conclude, by iteratively refining the filter dynamics and solving the robust performance design problem for the LFT~\eqref{eq:LFT_W_conn}, we can systematically perform loop-shaping for the system~\eqref{eq:sys}, without knowledge of $(A_{tr},B_{tr})$.
\end{remark}

\section{Example}\label{sec:example}
In this section, we apply the results of Section~\ref{sec:state_feedback} to the robust $\mathcal{H}_\infty$-control problem for an unstable example system.
We consider System~\eqref{eq:sys} with
\begin{align*}
A_{tr}&=\begin{bmatrix}-0.5&1.4&0.4\\
-0.9&0.3&-1.5\\
1.1&1&-0.4\end{bmatrix},\>\>B_{tr}=\begin{bmatrix}0.1&-0.3\\-0.1&-0.7\\0.7&-1
\end{bmatrix},\\
B_w&=I_3,\>\>C=I_3,\>\>D=0,\>\>D_{w}=0,
\end{align*}
where it is assumed that $A_{tr}$ and $B_{tr}$ are not available.
We generate data $\{x_k,u_k\}_{k=0}^{N}$ of length $N=20$ by sampling the input $u_k$ uniformly from $[-1,1]^2$ and the disturbance $\hat{w}$ uniformly from the ball $\lVert \hat{w}\rVert_2\leq\bar{w}$, where $\bar{w}=0.02$.
This implies the disturbance bound $\hat{W}\in\mathcal{W}$ for $Q_w=-I,S_w=0,R_w=\bar{w}^2I$.
In the following, we compute a state-feedback gain via Corollary~\ref{cor:performance} to achieve robust closed-loop quadratic performance with index $P=\begin{bmatrix}-\gamma^2I&0\\0&I\end{bmatrix}$ for a possibly small $\gamma>0$, i.e., a small $\mathcal{H}_\infty$-norm of $w\mapsto z$.
Following the procedure described in Remark~\ref{rk:LMI}, we verify that~\eqref{eq:cor_performance_LMI} and~\eqref{eq:cor_performance_equality} are feasible for $\gamma=2.4$ and we obtain a corresponding controller as $K=\begin{bmatrix}-2.45&-1.29&-2.4\\-0.61&-0.03&-2.18\end{bmatrix}$, which leads to a closed-loop $\mathcal{H}_\infty$-norm of $2.3$.
In contrast, the minimal achievable $\mathcal{H}_\infty$-norm using a nominal (model-based) state-feedback is $2.2$.
Thus, the proposed approach yields a controller with guaranteed performance close to the ideal case with full model knowledge, despite noisy measurements.
For larger noise levels $\bar{w}\geq0.04$, the design problem is infeasible since it addresses performance guarantees \emph{for all matrices} in the set $\Sigma_{A,B}^K$, which grows with $\bar{w}$.

In the following, we analyze the influence of the data length $N$ on the feasibility of~\eqref{eq:cor_performance_LMI} and~\eqref{eq:cor_performance_equality} for the above design problem.
To keep the signal-to-noise ratio (approximately) constant, we modify the bound $\bar{w}$ of the noise generating the data linearly with $N$, i.e., $\bar{w}=\frac{0.02}{20}N$.
For each data horizon $4\leq N\leq 20$, we perform $100$ experiments to generate data for the controller design, each with different (random) inputs $u$ and disturbances $\hat{w}$ as described above.
Figure~\ref{fig:loop_pe} shows the number of successful designs depending on $N$.
It can be observed that the feasibility of~\eqref{eq:cor_performance_LMI} and~\eqref{eq:cor_performance_equality} is enhanced if $N$ increases, and $N\geq15$ suffices to successfully design a controller from $100$ out of $100$ experiments.
Intuitively, this can be explained by noting that, with an increasing number of data points, the size of $\Sigma_{X,U}$ decreases, and Corollary~\ref{cor:performance} provides robust performance guarantees for (a superset of) the uncertain closed loop matrices $\Sigma_{X,U}^K$ consistent with the data.
Moreover, even for $N$ as low as $4$, in which case the data are not persistently exciting, the design is successful in more than $50\%$ of the scenarios.

\begin{figure}[H]
\begin{center}
\includegraphics[width=0.516\textwidth]{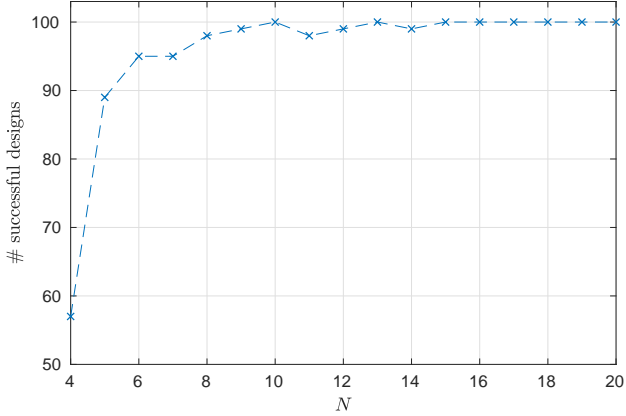}
\end{center}
\caption{Number of successful designs for which~\eqref{eq:cor_performance_LMI} and~\eqref{eq:cor_performance_equality} are feasible for the present example, depending on the data length $N$.
For each horizon $N$, $100$ experiments are carried out with varying random inputs and disturbances to generate data for controller design according to Corollary~\ref{cor:performance}.}
\label{fig:loop_pe}
\end{figure}

Finally, we comment on the computational complexity of the feasibility problem stated in Corollary~\ref{cor:performance}.
After its reformulation (cf. Remark~\ref{rk:LMI}), the problem contains an LMI with $2(n+m_w)+p_z+N$ rows, i.e., it is of size $35\times 35$ for the above example with $N=20$, as well as an equality constraint of size $n\times n=9$.
Moreover, the matrix variables\footnote{Note that $\mathcal{Y}$ is symmetric and has therefore only $\frac{n(n+1)}{2}$ \emph{free} decision variables.} $\mathcal{Y}$ and $M$ are of size $n\times n=3\times 3$ and $N\times n=20\times 3$, respectively.
The complexity of standard LMI solvers scales cubically with the number of decision variables.
Thus, the proposed controller design method scales cubically with the data length $N$ and proportionally to $n^6$ if $n$ is the system dimension, similar as in model-based robust controller design.

\section{Conclusion}\label{sec:conclusion}
The present paper provides direct, data-driven design procedures for state-feedback gains, which achieve guaranteed closed-loop stability and performance, using noisy input-state data.
Based on a data-driven parametrization of the closed-loop matrices that are consistent with the data, known robust control methods can be applied.
The parametrization is extended to a setting with partial model knowledge, and the design procedures are applied successfully to an unstable example system.
The proposed approach leads to \emph{end-to-end guarantees} for the closed loop, using a single noisy open-loop data trajectory of finite length, and is thus a promising alternative to sequential system identification and robust control.
Future research should extend the results of this paper to robust data-driven output-feedback control.

\addtolength{\textheight}{-12cm}   

\bibliographystyle{IEEEtran}
\bibliography{Literature}
\end{document}